\newtheorem{theorem}{Theorem}
\newtheorem{lemma}{Lemma}
\newtheorem{definition}{Definition}
\newtheorem{corollary}{Corollary}
\begin{document}

\title{Information Masking and Amplification: \\ The Source Coding Setting}

\author{\IEEEauthorblockN{Thomas A. Courtade}
\IEEEauthorblockA{Department of Electrical Engineering\\
University of California, Los Angeles\\
Email: tacourta@ee.ucla.edu}}

\maketitle

\begin{abstract}
The complementary problems of masking and amplifying channel state information in the Gel'fand-Pinsker channel have recently been solved by Merhav and Shamai, and Kim et al., respectively.
In this paper, we study a related source coding problem.  Specifically, we consider the two-encoder source coding setting where one source is to be amplified, while the other source is to be masked.  In general, there is a tension between these two objectives which is characterized by the amplification-masking tradeoff.  In this paper, we give a single-letter description of this tradeoff.  

We apply this result, together with a recent theorem by Courtade and Weissman on multiterminal source coding, to solve  a fundamental entropy characterization problem.

\end{abstract}

\section{Introduction}\label{sec:Intro}

The well known source coding with side information problem 
has an achievable rate region given by 
\begin{align*}
R_x \geq H(X|U), ~~
R_y \geq I(Y;U)
\end{align*}
as originally shown by Ahlswede and K\"{o}rner \cite{bib:AhlswedeKorner1975}, and independently by Wyner \cite{bib:Wyner1975}.  In this setting,  the side information encoder merely serves as a helper with the sole purpose of aiding in the recovery of $X^n$ at the decoder.  However, for given rates $(R_x,R_y)$, there may be many different coding schemes which permit recovery of $X^n$ at the decoder. In some cases, it may be desirable to select a coding scheme that reveals very little information about the side information $Y^n$ to the decoder.  We refer to this objective as \emph{masking} the side information.

To motivate this setting, consider the following example.  Suppose $X$ is an attribute of an online customer that an advertiser would like to specifically target (e.g., gender), and $Y$ is other detailed information about the same customer (e.g.,  credit history).  Companies A and B separately have databases $X^n$ and $Y^n$ corresponding to $n$ different customers (the databases could be indexed by IP address, for example).  The advertiser pays Companies A and B to learn as much about the database $X^n$ as possible.  Now, suppose governing laws prohibit the database $Y^n$ from being revealed too extensively.  In this case, the material given to the advertiser must be chosen so that at most a prescribed amount of information is revealed about $Y^n$.  

In general, a masking constraint on $Y^n$ may render near-lossless reconstruction of $X^n$ impossible.  This motivates the study the \emph{amplification-masking tradeoff}.  That is,  the  tradeoff between amplifying (or revealing) information about $X^n$ while simultaneously masking the side information $Y^n$.

Similar problems have been previously considered in the information theory literature on secrecy and privacy.  For example, Sankar et al.~determine the utility-privacy tradeoff for the case of a single encoder in \cite{bib:SankarPrivacy2011}.  In their setting, the random variable $X$ is a vector with a given set of coordinates that should be masked and another set that should be revealed (up to a prescribed distortion). In this context, our study of the amplification-masking tradeoff is a distributed version of \cite{bib:SankarPrivacy2011}, in which utility is measured by the information revealed about  the database $X^n$.  The problem we consider is distinct from those typically studied in the information-theoretic secrecy literature, in that the masking (i.e., equivocation) constraint corresponds to the intended decoder, rather than an eavesdropper.

We remark that the present paper is inspired in part by the recent, complementary works \cite{bib:YHKimSutivongCover2008} and \cite{bib:MerhavShamai2007} which respectively study  amplification and masking of channel state information.  We borrow our terminology from those works.

This paper is organized as follows.  Section \ref{sec:results} formally defines the problems considered and delivers our main results.  The corresponding proofs are given in Section \ref{sec:proofs}.  Final remarks and directions for future work are discussed in Section \ref{sec:conc}.

\section{Problem Statement and Results}\label{sec:results}
Throughout this paper we adopt  notational conventions that are standard in the literature.  Specifically, random variables are denoted by capital letters (e.g., $X$) and their corresponding alphabets are denoted by corresponding calligraphic letters (e.g., $\mathcal{X}$). We abbreviate a sequence $(X_1,\dots,X_n)$ of $n$ random variables by $X^n$,  and we let $\delta(\epsilon)$ represent a quantity satisfying $\lim_{\epsilon\rightarrow 0}\delta(\epsilon)=0$.  
Other notation will be introduced where necessary.

For a joint distribution $p(x,y)$ on finite alphabets $\mathcal{X}\times \mathcal{Y}$, consider the source coding setting where separate Encoders 1 and 2 have access to the sequences $X^n$ and $Y^n$, respectively.  We make the standard assumption that the sequences $(X^n,Y^n)$ are drawn i.i.d.~according to $p(x,y)$ (i.e., $X^n,Y^n \sim \prod_{i=1}^n p(x_i,y_i)$), and $n$ can be taken arbitrarily large.

The first of the following three subsections characterizes the amplification-masking tradeoff.  This result is applied to solve a fundamental entropy characterization in the second subsection.  The final subsection comments on the connection between information amplification and list decoding.   Proofs of the main results are postponed until Section \ref{sec:proofs}.

\subsection{The Amplification-Masking Tradeoff}
Formally, a $(2^{nR_x},2^{nR_y},n)$  code is defined by its encoding functions 
\begin{align*}
f_{x}:\mathcal{X}^n\rightarrow\{1,\dots,2^{nR_x}\} \mbox{~and~}f_{y}:\mathcal{Y}^n\rightarrow\{1,\dots,2^{nR_y}\}. 
\end{align*}
A rate-amplification-masking tuple $(R_x,R_y,\Delta_A, \Delta_M)$ is achievable if, for any $\epsilon>0$, there exists a  $(2^{nR_x},2^{nR_y},n)$ code satisfying the amplification criterion:
\begin{align}
\Delta_A \leq \frac{1}{n}I\left(X^n;f_x(X^n),f_y(Y^n)\right) +\epsilon, \label{eqn:AmpCrit}
\end{align}
and the masking criterion:
\begin{align}
\Delta_M \geq \frac{1}{n}I\left(Y^n;f_x(X^n),f_y(Y^n)\right) -\epsilon. \label{eqn:MaskCrit}
\end{align}
Thus, we see that the amplification-masking problem is an entropy characterization problem similar to that considered in \cite[Chapter 15]{bib:CsiszarKorner1981}.
\begin{definition} The achievable amplification-masking region $\mathcal{R}_{AM}$ is the closure of the set of all achievable rate-amplification-masking tuples $(R_x,R_y,\Delta_A,\Delta_M)$.
\end{definition}

\begin{theorem}\label{thm:Ampmasking}
$\mathcal{R}_{AM}$ consists of the rate-amplification-masking  tuples $(R_x,R_y,\Delta_A,\Delta_M)$ satisfying 
\begin{align}
\hspace{-10pt}
\left. \begin{array}{rl}
R_x  & \geq \Delta_A-I(X;U)\\
R_y &\geq I(Y;U) \\
\Delta_M &\geq \max\left\{  I(Y;U,X)+\Delta_A-H(X), I(Y;U) \right\}\\
\Delta_A &\leq H(X).
\end{array} \right\} \label{eqn:thm1Ineqs}
\end{align}
for some joint distribution $p(x,y,u)=p(x,y)p(u|y)$, where $|\mathcal{U}|\leq|\mathcal{Y}|+1$.
\end{theorem}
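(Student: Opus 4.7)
The plan is to prove Theorem~\ref{thm:Ampmasking} by matching achievability and converse bounds, with the auxiliary $U$ (satisfying $p(x,y,u)=p(x,y)p(u|y)$, hence the Markov chain $X-Y-U$) serving as a Wyner--Ahlswede--K\"orner style covering codeword on the $Y$-side, combined with a uniformly random hash providing the $X$-description.

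For achievability, Encoder~2 generates $2^{n(I(Y;U)+\delta)}$ codewords $U^n$ i.i.d.\ from $p(u)$ and transmits the index $M_y$ of one jointly typical with $Y^n$; standard covering requires $R_y=I(Y;U)+\delta$. Encoder~1 picks a uniformly random map $f_x:\mathcal{X}^n\to\{1,\dots,2^{nR_x}\}$ and sends $M_x=f_x(X^n)$. Because a random hash into $2^{nR_x}$ bins acts like an injection on sets of size $\leq 2^{nR_x}$ and uniformly on larger sets, routine joint-typicality computations give
\begin{align*}
\tfrac{1}{n}I(X^n;M_x,M_y)&\to I(X;U)+\min\{R_x,H(X|U)\},\\
\tfrac{1}{n}I(Y^n;M_x,M_y)&\to I(Y;U)+\bigl[\min\{R_x,H(X|U)\}-\min\{R_x,H(X|Y)\}\bigr].
\end{align*}
Operating in the regime $R_x\leq H(X|U)$, so that $\Delta_A=R_x+I(X;U)\leq H(X)$, and invoking the identity $H(X|U)-H(X|Y)=I(X;Y|U)$ (which follows from $X-Y-U$) traces out the boundary of \eqref{eqn:thm1Ineqs}. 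Saturation at $\Delta_A=H(X)$ corresponds to $R_x\geq H(X|U)$, and time-sharing yields interior points.

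For the converse I would set $U_i:=(M_y,Y^{i-1})$ and introduce a time-sharing variable $Q\sim\mathrm{Unif}\{1,\dots,n\}$ independent of $(X^n,Y^n)$, writing $U:=(Q,U_Q)$, $X:=X_Q$, $Y:=Y_Q$. Since $U_i$ depends only on $Y^n$ and $(X^n,Y^n)$ is i.i.d., the Markov chain $X-Y-U$ holds. The bounds $R_y\geq I(Y;U)$ and $\Delta_M\geq I(Y;U)$ are immediate from $nR_y\geq H(M_y)\geq I(Y^n;M_y)=nI(Y;U)$ and $n\Delta_M+n\epsilon\geq I(Y^n;M_x,M_y)\geq I(Y^n;M_y)$. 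The bound $R_x\geq\Delta_A-I(X;U)$ follows from $n\Delta_A\leq I(X^n;M_x,M_y)+n\epsilon\leq nR_x+I(X^n;M_y)+n\epsilon$ combined with $H(X^n|M_y)\geq\sum_i H(X_i|M_y,Y^{i-1},X^{i-1})=\sum_i H(X_i|U_i)=nH(X|U)$, where the key equality rests on $X_i$ being conditionally independent of $X^{i-1}$ given $(M_y,Y^{i-1})$ by the i.i.d.\ structure. The bound $\Delta_A\leq H(X)$ is the trivial upper estimate $\frac{1}{n}I(X^n;\cdot)\leq H(X)$.

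The main obstacle is the second masking inequality $\Delta_M\geq I(Y;U,X)+\Delta_A-H(X)$. The plan is to decompose $I(Y^n;M_x,M_y)=I(Y^n;M_y)+I(Y^n;M_x\mid M_y)$, observe
\begin{align*}
I(Y^n;M_x\mid M_y)&=H(M_x\mid M_y)-H(M_x\mid Y^n)\\
&\geq I(X^n;M_x\mid M_y)-nH(X|Y)
\end{align*}
using $H(M_x\mid Y^n)\leq H(X^n\mid Y^n)=nH(X|Y)$, and then invoke the converse estimate $I(X^n;M_x\mid M_y)\geq n\Delta_A-nI(X;U)-n\epsilon$, obtained by combining the amplification hypothesis $H(X^n\mid M_x,M_y)\leq n(H(X)-\Delta_A)+n\epsilon$ with $H(X^n\mid M_y)\geq nH(X|U)$. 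Rearranging via the identity $I(Y;U,X)-I(Y;U)=H(X|U)-H(X|Y)$ (again using $X-Y-U$) then converts the resulting single-letter expression into the claimed form. Finally, the cardinality bound $|\mathcal{U}|\leq|\mathcal{Y}|+1$ is obtained by a standard Fenchel--Eggleston / Carath\'eodory support-lemma argument preserving $p(y)$ together with the single relevant functional of $U$, the remaining quantities being pinned down by the Markov structure.
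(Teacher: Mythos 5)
Your converse is sound and, for the hardest inequality, takes a mildly different but equivalent route from the paper's. The paper obtains $\Delta_M \geq I(Y;U,X)+\Delta_A-H(X)$ from the identity $I(Y^n;F_x,F_y)=I(Y^n;X^n,F_y)+I(X^n;F_x,F_y)-I(X^n;F_x,Y^n)$, bounding $I(X^n;F_x,Y^n)\leq H(X^n)$ and single-letterizing $I(Y^n;X^n,F_y)\geq\sum_i I(Y_i;X_i,U_i)$; you instead expand $I(Y^n;M_x,M_y)$ by the chain rule and combine $H(M_x|Y^n)\leq nH(X|Y)$ with $H(X^n|M_y)\geq nH(X|U)$. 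Both collapse to the same single-letter expression via $X-Y-U$, and your version is arguably more transparent. The remaining converse bounds and the choice $U_i=(M_y,Y^{i-1})$ coincide with the paper's. One small quibble: preserving $p(y)$ plus ``the single relevant functional'' would give $|\mathcal{U}|\leq|\mathcal{Y}|$; the stated bound $|\mathcal{Y}|+1$ arises because two functionals (e.g.\ $H(X|U)$ and $I(Y;U|X)$) must be preserved to pin down $I(X;U)$, $I(Y;U)$ and $I(Y;U,X)$ simultaneously.

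The achievability sketch uses the paper's scheme (WAK covering on the $Y$-side, random hashing on the $X$-side), and your closed-form limits are correct and do trace out the boundary of \eqref{eqn:thm1Ineqs}; but the two asserted asymptotics are exactly where the work lies, and ``routine joint-typicality computations'' hides it. The statement that the hash index conditioned on $U^n(L)$ (resp.\ on $Y^n$) has entropy close to $n\min\{R_x,H(X|U)\}$ (resp.\ $n\min\{R_x,H(X|Y)\}$) requires a second-moment/balanced-binning argument: the paper proves the first via the appendix lemma, and for the second it avoids the intermediate regime $0<R_x<H(X|Y)$ altogether by establishing the bound only in the two extreme cases $\Delta_A\leq I(X;U)$ and $\Delta_A\geq I(X;U)+H(X|Y)$ and timesharing between them. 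Your route would instead need a direct proof that the bin index remains nearly uniform given $Y^n$ when $R_x<H(X|Y)$ --- true, but not free. Second, amplification is a lower bound and masking an upper bound on functionals of the \emph{same} code; verifying each in expectation over random codebooks does not by itself yield one code meeting both. The paper closes this by concatenating $N$ independent blocks and invoking the weak law of large numbers; your proposal needs an analogous derandomization step.
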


Observe that $\mathcal{R}_{AM}$ characterizes the entire tradeoff between amplifying $X^n$ and masking $Y^n$.  We remark that maximum amplification $\Delta_A = H(X)$ does not necessarily imply that $X^n$ can be recovered near-losslessly at the encoder.  However, if an application demands near lossless reproduction of the sequence $X^n$, Theorem \ref{thm:Ampmasking} can be strengthened to include this case.  To this end, define a rate-masking triple $(R_x,R_y,\Delta_M)$ to be achievable if, for any $\epsilon>0$, there exists a  $(2^{nR_x},2^{nR_y},n)$  code satisfying the masking criterion \eqref{eqn:MaskCrit}, and a decoding function
\begin{align*}
\hat{X}^n &: \{1,2,\dots,2^{nR_x}\} \times \{1,2,\dots,2^{nR_y}\}  \rightarrow \mathcal{X}^n
\end{align*}
which satisfies the decoding-error criterion
\begin{align*}
\Pr\left[ X^n \neq \hat{X}^n(f_x(X^n),f_y(Y^n)) \right]\leq \epsilon. 
\end{align*}

\begin{definition} The achievable rate-masking region $\mathcal{R}_M$ is the closure of the set of all achievable rate-masking triples $(R_x,R_y,\Delta_M)$.
\end{definition}

\begin{corollary}\label{cor:masking}
$\mathcal{R}_M$ consists of the rate-masking triples $(R_x,R_y,\Delta_M)$ satisfying 
\begin{align*}
R_x &\geq H(X|U)\\
R_y &\geq I(Y;U)\\
\Delta_M &\geq I(Y;X,U)
\end{align*}
for some joint distribution $p(x,y,u)=p(x,y)p(u|y)$, where $|\mathcal{U}|\leq|\mathcal{Y}|+1$.
\end{corollary}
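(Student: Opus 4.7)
The plan is to obtain Corollary 1 as a direct specialization of Theorem 1, using Fano's inequality to translate the near-lossless decoding requirement into the maximum-amplification condition $\Delta_A = H(X)$. Once this reduction is made, the region collapses to the stated form after a short algebraic simplification.

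For the converse, let $M_x = f_x(X^n)$ and $M_y = f_y(Y^n)$ for any code meeting the decoding-error criterion. Fano's inequality applied to $\hat{X}^n$ gives $H(X^n \mid M_x, M_y) \leq n \epsilon_n$ with $\epsilon_n \to 0$, so
\begin{align*}
\tfrac{1}{n} I(X^n; M_x, M_y) \geq H(X) - \epsilon_n.
\end{align*}
Thus the tuple $(R_x, R_y, H(X) - \epsilon_n, \Delta_M)$ satisfies the amplification-masking criteria of Theorem~\ref{thm:Ampmasking}, and invoking that theorem with $\Delta_A \to H(X)$ (using the closedness of $\mathcal{R}_{AM}$ together with the bounded cardinality of $\mathcal{U}$) yields $R_x \geq H(X) - I(X;U) = H(X|U)$, $R_y \geq I(Y;U)$, and $\Delta_M \geq \max\{I(Y; X, U) + H(X) - H(X),\, I(Y;U)\} = I(Y; X, U)$, where the final equality uses $I(Y; X, U) \geq I(Y;U)$. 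The cardinality bound $|\mathcal{U}| \leq |\mathcal{Y}|+1$ is inherited.

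For achievability, I would use a Wyner-Ziv-style construction. Encoder 2 generates $2^{nR_y}$ sequences $U^n$ i.i.d.\ from $p(u)$ and, upon observing $Y^n$, sends the index $M_y$ of a codeword jointly typical with $Y^n$, which succeeds when $R_y > I(Y;U)$. Encoder 1 independently bins $\mathcal{X}^n$ into $2^{nR_x}$ bins and sends the bin index $M_x$ of $X^n$. The decoder first recovers $U^n$ from $M_y$ and then identifies $X^n$ within the bin by joint typicality decoding, which succeeds when $R_x > H(X|U)$. Since $(M_x, M_y)$ is a deterministic function of $(X^n, U^n)$ given the codebook, the masking bound
\begin{align*}
I(Y^n; M_x, M_y) \leq I(Y^n; X^n, U^n) \leq n\, I(Y; X, U) + n \delta(\epsilon)
\end{align*}
follows. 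The main obstacle is justifying the final step cleanly: because $U^n$ is selected via a covering operation rather than drawn i.i.d.\ from $p(u \mid Y^n)$, one must invoke a soft-covering or typicality argument to show that the induced joint law on $(X^n, Y^n, U^n)$ is close, on average over the codebook, to the i.i.d.\ target $\prod p(x_i, y_i, u_i)$, so that $\tfrac{1}{n} I(Y^n; X^n, U^n)$ indeed converges to $I(Y; X, U)$. This is technical but follows standard lines and is presumably already carried out in the achievability proof of Theorem~\ref{thm:Ampmasking}.
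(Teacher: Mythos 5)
Your overall strategy---specialize Theorem~\ref{thm:Ampmasking} to $\Delta_A=H(X)$ after using Fano's inequality to convert the decoding-error criterion into the amplification criterion---is exactly the paper's strategy, and your converse is correct: it makes explicit the Fano step that the paper leaves implicit when it says ``the converse of Theorem~\ref{thm:Ampmasking} continues to apply,'' and the algebraic collapse of the region (using $I(Y;X,U)\geq I(Y;U)$) is right.

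The achievability half, however, takes a genuinely different route for the masking bound, and the step you defer is \emph{not} ``presumably already carried out'' in the paper. You bound $I(Y^n;M_x,M_y)\leq I(Y^n;X^n,U^n)$ and then claim $\tfrac1n I(Y^n;X^n,U^n)\leq I(Y;X,U)+\delta(\epsilon)$ by arguing that the induced law of $(X^n,Y^n,U^n)$ is near-i.i.d. Note that the trivial bound one gets without such an argument is strictly weaker: since $U^n(L)$ is a function of $(Y^n,\mathcal{C})$, one only gets $I(Y^n;X^n,U^n\mid\mathcal{C})=nI(X;Y)+H(U^n\mid X^n,\mathcal{C})\leq nI(X;Y)+n(I(Y;U)+\epsilon)$, which exceeds $nI(Y;X,U)=nI(X;Y)+nI(Y;U|X)$ by $nI(X;U)$. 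Closing this gap requires showing $H(U^n(L)\mid X^n,\mathcal{C})\lesssim nI(Y;U|X)$, i.e., a counting/soft-covering argument showing that only about $2^{n(\tilde R-I(X;U))}$ codewords are compatible with a given $X^n$. This is true and provable, but it is a separate lemma you would need to supply. The paper avoids it entirely: its masking bound (the chain culminating in \eqref{eqn:continue}--\eqref{eqn:finalStep}) uses only $I(Y^n;L\mid\mathcal{C})\leq H(L)\leq n\tilde R$ together with the Slepian--Wolf property $H(X^n\mid Y^n,b(X^n),\mathcal{C})\leq n\epsilon$ (which holds here since $\Delta_A=H(X)$ puts you in the ``$\Delta_A\geq I(X;U)+H(X|Y)$'' case), and never needs the induced joint law to approximate the i.i.d.\ target. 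One further small omission: you establish the decoding and masking criteria separately (in expectation / with high probability over codebooks), but the corollary requires a single code meeting both simultaneously; the paper handles the analogous issue in Theorem~\ref{thm:Ampmasking} by concatenating $N$ independent blocks and applying the law of large numbers, and some such argument is needed here too.
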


\subsection{An Entropy Characterization Result}
As we previously noted, the amplification-masking tradeoff solves a multi-letter entropy characterization problem by reducing it to single-letter form.  The reader is directed to \cite{bib:CsiszarKorner1981} for an introduction to entropy characterization problems.  Here, we apply our results to yield a fundamental characterization of the information revealed about $X^n$ and $Y^n$, respectively, by arbitrary encoding functions $f_x$ and $f_y$ (of rates $R_x, R_y$).

\begin{definition}
Define the region $\mathcal{R}^{\star}(R_x,R_y)$ as follows.  The pair $(\Delta_X,\Delta_Y)\in \mathcal{R}^{\star}(R_x,R_y)$ if and only if, for any $\epsilon>0$, there exists a $(2^{nR_x},2^{nRy},n)$ code satisfying
\begin{align*}
\left| \Delta_X -\frac{1}{n}I(X^n;f_x(X^n),f_y(Y^n))\right| &\leq \epsilon, \mbox{~and}\\
\left| \Delta_Y -\frac{1}{n}I(Y^n;f_x(X^n),f_y(Y^n))\right| &\leq \epsilon.
\end{align*}
Let $\overline{\mathcal{R}^{\star}} (R_x,R_y)$ be the closure of $\mathcal{R}^{\star}(R_x,R_y)$.  
\end{definition}

Ultimately we obtain a single-letter description of $\overline{\mathcal{R}^{\star}} (R_x,R_y)$.  However, in order to do so, we require some notation.  To this end, let:
\begin{align*}
\mathcal{R}_{AM}(R_x,R_y) = \left\{ (\Delta_X,\Delta_Y) : (R_x,R_y,\Delta_X,\Delta_Y)\in \mathcal{R}_{AM}  \right\}.
\end{align*}
Symmetrically, let $\mathcal{R}_{MA}$ be the region where $X^n$ is subject to masking $\Delta_X$ and $Y^n$ is subject to amplification $\Delta_Y$.  Let 
\begin{align*}
\mathcal{R}_{MA}(R_x,R_y) = \left\{ (\Delta_X,\Delta_Y) : (R_x,R_y,\Delta_X,\Delta_Y)\in \mathcal{R}_{MA}  \right\}.
\end{align*}
Finally, let $\mathcal{R}_{AA}(R_x,R_y)$ consist of all pairs $(\Delta_X,\Delta_Y)$ satisfying
\begin{align*}
R_x  & \geq I(U_x;X|U_y,Q)\\
R_y &\geq I(U_y;Y|U_x,Q) \\
R_x+R_y &\geq I(U_x,U_y;X,Y|Q) \\
\Delta_X &\leq I(X;U_x,U_y|Q) \\
\Delta_Y &\leq  I(Y;U_x,U_y|Q)
\end{align*}
for some joint distribution of the form
\begin{align*}
p(x,y,u_x,u_y,q)=p(x,y)p(u_x|x,q)p(u_y|y,q)p(q), 
\end{align*}
where $|\mathcal{U}_x|\leq|\mathcal{X}|$, $|\mathcal{U}_y|\leq|\mathcal{Y}|$, and $|\mathcal{Q}|\leq5$.

\begin{theorem}\label{thm:entropy}
The region $\overline{\mathcal{R}^{\star}} (R_x,R_y)$ has a single-letter characterization given by
\begin{align*}
\overline{\mathcal{R}^{\star}}& (R_x,R_y)=\\
&\mathcal{R}_{AM}(R_x,R_y)  \cap \mathcal{R}_{MA}(R_x,R_y)  \cap \mathcal{R}_{AA}(R_x,R_y) .
\end{align*}
Moreover, restriction of the encoding functions to vector-quantization and/or random binning is sufficient to achieve any point in $\overline{\mathcal{R}^{\star}} (R_x,R_y)$.
\end{theorem}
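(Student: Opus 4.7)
The plan is to prove the two containments separately.  For the converse $\overline{\mathcal{R}^{\star}}(R_x,R_y) \subseteq \mathcal{R}_{AM}(R_x,R_y) \cap \mathcal{R}_{MA}(R_x,R_y) \cap \mathcal{R}_{AA}(R_x,R_y)$, fix $(\Delta_X,\Delta_Y)\in\overline{\mathcal{R}^{\star}}(R_x,R_y)$ and consider the associated sequence of $(2^{nR_x},2^{nR_y},n)$ codes whose induced mutual informations satisfy $\tfrac{1}{n}I(X^n;f_x(X^n),f_y(Y^n))\to\Delta_X$ and $\tfrac{1}{n}I(Y^n;f_x(X^n),f_y(Y^n))\to\Delta_Y$.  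Every such code simultaneously satisfies the amplification criterion \eqref{eqn:AmpCrit} with $\Delta_A=\Delta_X$ and the masking criterion \eqref{eqn:MaskCrit} with $\Delta_M=\Delta_Y$; Theorem~\ref{thm:Ampmasking} therefore places the tuple in $\mathcal{R}_{AM}(R_x,R_y)$.  Swapping the roles of $X$ and $Y$ yields $(\Delta_X,\Delta_Y)\in\mathcal{R}_{MA}(R_x,R_y)$.  Membership in $\mathcal{R}_{AA}(R_x,R_y)$---where both mutual informations act as amplification quantities---is supplied by the converse half of the Courtade--Weissman theorem on multiterminal source coding referenced in the abstract, which delivers the conditional-independence structure of the auxiliaries $(U_x,U_y,Q)$ arising from distributed encoding (an issue a naive Berger--Tung identification does not fully resolve).

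For the achievability, take $(\Delta_X,\Delta_Y)$ in the intersection.  Using the single-letter description of $\mathcal{R}_{AA}(R_x,R_y)$, select auxiliaries $(U_x,U_y,Q)$ of the prescribed form satisfying the rate inequalities \emph{and} the equalities $I(X;U_x,U_y|Q)=\Delta_X$, $I(Y;U_x,U_y|Q)=\Delta_Y$; interior points of $\mathcal{R}_{AA}$ are hit by coarsening the quantizers or by time-sharing with degenerate components so that the $\Delta\leq I(\cdot)$ bounds become equalities.  The achievability half of the Courtade--Weissman theorem then produces a VQ+binning code whose decoder recovers $(U_x^n,U_y^n)$ with vanishing error probability.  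Fano's inequality gives $\tfrac{1}{n}I(X^n;f_x,f_y)\geq \tfrac{1}{n}I(X^n;U_x^n,U_y^n|Q^n)-o(1)\to I(X;U_x,U_y|Q)=\Delta_X$, while $(f_x,f_y)$ being a (random) function of $(U_x^n,U_y^n,Q^n)$ together with the independence $Q\perp X$ supplies the matching upper bound $\tfrac{1}{n}I(X^n;f_x,f_y)\leq I(X;U_x,U_y|Q)=\Delta_X$; the identical sandwich holds for $Y$.  Hence the code realizes $(\Delta_X,\Delta_Y)\in\mathcal{R}^{\star}(R_x,R_y)$.

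The main obstacle is reconciling the fact that the converse naturally employs one auxiliary per region, while the achievability must exhibit a \emph{single} code that simultaneously meets all three sets of inequalities.  The resolution is that the $\mathcal{R}_{AM}$ and $\mathcal{R}_{MA}$ constraints are phrased \emph{directly} in terms of the mutual informations $\tfrac{1}{n}I(X^n;f_x,f_y)$ and $\tfrac{1}{n}I(Y^n;f_x,f_y)$; once a single VQ+binning code is shown to attain exactly the target values $(\Delta_X,\Delta_Y)$ for these quantities, the remaining region constraints are automatically satisfied by the hypothesis that the target lies in the intersection.  Consequently the real technical weight sits in the invoked Courtade--Weissman result, which yields both the $\mathcal{R}_{AA}$ converse and the VQ+binning achievability with the correct Markov structure in one stroke.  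The stated cardinality bounds on $\mathcal{U}_x,\mathcal{U}_y,\mathcal{Q}$ then follow from standard support-lemma applications to the single-letter expressions.
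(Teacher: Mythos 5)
Your converse is the paper's converse: a code meeting the two-sided constraints in the definition of $\mathcal{R}^{\star}(R_x,R_y)$ automatically satisfies the one-sided criteria defining each of $\mathcal{R}_{AM}$, $\mathcal{R}_{MA}$, $\mathcal{R}_{AA}$, with the last handled by the strengthened Courtade--Weissman theorem. The genuine gap is in your achievability. You propose to realize a target $(\Delta_X,\Delta_Y)$ in the intersection with a \emph{single} Berger--Tung code whose auxiliaries satisfy $I(X;U_x,U_y|Q)=\Delta_X$ and $I(Y;U_x,U_y|Q)=\Delta_Y$ simultaneously with equality. Membership in $\mathcal{R}_{AA}(R_x,R_y)$ only guarantees auxiliaries with $I(X;U_x,U_y|Q)\geq\Delta_X$ and $I(Y;U_x,U_y|Q)\geq\Delta_Y$; ``coarsening the quantizers'' moves both mutual informations at once, and nothing guarantees both can be driven down to their respective targets by one choice of $(U_x,U_y,Q)$. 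Indeed this is impossible in general: for the distribution \eqref{eqn:jointDist}, take $\Delta_Y=0$ and $0<\Delta_X\leq H(X|Y)$ with $R_x\geq\Delta_X$. This point lies in the triple intersection (take $U$ constant in the $\mathcal{R}_{AM}$ and $\mathcal{R}_{MA}$ descriptions, and $U_y$ constant in $\mathcal{R}_{AA}$), yet any $(U_x,U_y,Q)$ of the prescribed Markov form with $I(Y;U_x,U_y|Q)=0$ forces $I(X;U_x,U_y|Q)=0$ for this $P_{X,Y}$, because the rows of $p(y|x)$ are linearly independent, so no nonconstant quantization of $X$ is independent of $Y$. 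Hence no single code of your form attains $(\Delta_X,0)$; the fact that your construction never actually uses membership in $\mathcal{R}_{AM}$ or $\mathcal{R}_{MA}$ is the warning sign.

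The paper closes this gap differently: from membership in each of the three regions, plus the trivial code with constant encoders, it extracts \emph{four} codes whose normalized mutual-information pairs lie (up to $\epsilon$) in the four closed quadrants around the target --- the AM code overshoots $\Delta_X$ and undershoots $\Delta_Y$, the MA code does the reverse, the AA code overshoots both, and the constant code undershoots both. Time-sharing these codes over independent sub-blocks makes the normalized mutual informations additive, so a suitable convex combination lands within $\delta(\epsilon)$ of the target in both coordinates. That four-code time-sharing step is what your proposal is missing. Separately, your claimed upper bound $\tfrac1n I(X^n;f_x,f_y)\leq I(X;U_x,U_y|Q)$ ``because $(f_x,f_y)$ is a function of $(U_x^n,U_y^n,Q^n)$'' also needs justification, since the induced joint law of $(X^n,U_x^n,U_y^n)$ under the coding scheme is not i.i.d.; controlling these quantities is precisely the content of the strengthened Courtade--Weissman result and of the Lemma in the paper's Appendix.
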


The second statement of Theorem \ref{thm:entropy} is notable since it states that relatively simple encoding functions (i.e., vector quantization and/or binning) can asymptotically reveal the same amount of information about $X^n$ and $Y^n$, respectively, as encoding functions that are only restricted in rate.  In contrast, this is not true for the setting of three or more sources, as the modulo-sum problem studied by K\"{o}rner and Marton \cite{bib:KornerMarton1979} provides a counterexample where the Berger-Tung achievability scheme  \cite{bib:BergerLongo1977} is not optimal.  Thus, obtaining a characterization like Theorem \ref{thm:entropy} for three or more sources represents a formidable challenge.

We remark that the points in $\overline{\mathcal{R}^{\star}} (R_x,R_y)$ with $\Delta_X=H(X)$ and/or $\Delta_Y=H(Y)$ also capture the more stringent constraint(s) of near-lossless reproduction of $X^n$ and/or $Y^n$, respectively.  This is a consequence of  Corollary \ref{cor:masking}.

To give a concrete example of $\overline{\mathcal{R}^{\star}} (R_x,R_y)$, consider the following joint distribution:
\begin{align}
\begin{tabular}{c|cc}
$P_{X,Y}(x,y)$ & $x=0$ & $x=1$\\ \hline
$y=0$ & $1/3$ & $0$ \\
$y=1$ & $1/6$ & $1/2.$ \\
\end{tabular} \label{eqn:jointDist}
\end{align}
By performing a brute-force search over the auxiliary random variables defining $\overline{\mathcal{R}^{\star}} (R_x,R_y)$ for the distribution $P_{X,Y}$, we have  obtained numerical approximations of $\overline{\mathcal{R}^{\star}} (\cdot,\cdot)$ for several different pairs of $(R_x,R_y)$.  The results are given in Figure \ref{fig:regionExp}.

\begin{figure}
\centering
\includegraphics[trim = 28mm 72mm 30mm 75mm, clip,width=3.25in]{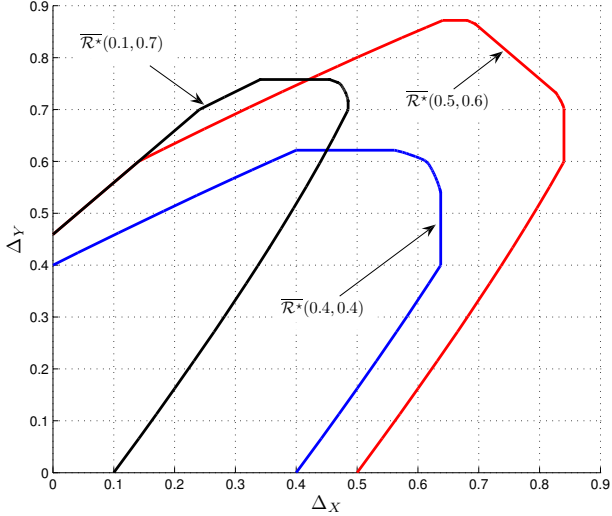}
\caption{The region $\overline{\mathcal{R}^{\star}} (R_x,R_y)$ for joint distribution $P_{X,Y}$ given by \eqref{eqn:jointDist} and three different pairs of rates. Rate pairs $(R_x,R_y)$ equal to $(0.1,0.7)$, $(0.4,0.4)$, and $(0.5,0.6)$ define the convex regions bounded by the black, blue, and red curves, respectively.}
\label{fig:regionExp}
\end{figure}

\subsection{Connection to List Decoding}
We briefly comment on the connection between an amplification constraint and list decoding.
As discussed in detail in \cite{bib:YHKimSutivongCover2008}, the amplification criterion \eqref{eqn:AmpCrit} is essentially equivalent to the requirement for a list decoder
\begin{align*}
L_n : \{1,\dots,2^{nR_x}\}\times \{1,\dots,2^{nR_y}\}\rightarrow 2^{\mathcal{X}^n}
\end{align*}
with list size and probability of error respectively satisfying
\begin{align*}
& \log |L_n| \leq n(H(X)-\Delta_A+\epsilon), \mbox{and}\\
& \Pr\left[ X^n \notin L_n(f_x(X^n),f_y(Y^n)) \right]\leq \epsilon.
\end{align*}
Thus maximizing the amplification of $X^n$ subject to given rate and masking constraints can be thought of as characterizing the best list decoder in that setting.

\section{Proofs of Main Results}\label{sec:proofs}

\begin{proof}[Proof of Theorem \ref{thm:Ampmasking}]
\emph{Converse Part:} Suppose $(R_x,R_y,\Delta_A,\Delta_M)$ is achievable. For convenience, define $F_x=f_x(X^n)$, $F_y=f_y(Y^n)$, and $U_i=(F_y,Y^{i-1})$.

First, note that $\Delta_A \leq H(X)$ is trivially satisfied.  Next, 
the constraint on $R_x$ is given by:
\begin{align}
nR_x &\geq H(F_x) \geq H(F_x|F_y)\notag \\
 &=\sum_{i=1}^n H(X_i|F_y,X^{i-1}) -H(X^n|F_x,F_y)\notag\\
 &\geq \sum_{i=1}^n H(X_i|F_y,Y^{i-1},X^{i-1}) -H(X^n|F_x,F_y) \notag \\
 &=I(X^n;F_x,F_y)-\sum_{i=1}^n I(X_i;U_i)\label{eqn:markovAM} \\
 &\geq n(\Delta_A-\epsilon) -\sum_{i=1}^n I(X_i;U_i).\label{eqn:convAmpCrit}
\end{align}
Equality \eqref{eqn:markovAM} follows since $X_i\leftrightarrow F_y,Y^{i-1}\leftrightarrow X^{i-1}$ form a Markov chain, and inequality \eqref{eqn:convAmpCrit} follows since amplification $\Delta_A$ is achievable.

The constraint on $R_y$ is trivial:
\begin{align*}
nR_y &\geq H(F_y) \geq I(F_y;Y^n)=\sum_{i=1}^n I(Y_i;F_y|Y^{i-1})\\
&=\sum_{i=1}^n I(Y_i;F_y,Y^{i-1})=\sum_{i=1}^n I(Y_i;U_i).
\end{align*}
Similarly, we obtain the first lower bound on $\Delta_M$:
\begin{align*}
n(\Delta_M+\epsilon) &\geq I(Y^n;F_x,F_y) 
\geq  I(Y^n;F_y) = \sum_{i=1}^n I(Y_i;U_i).
\end{align*}
The second lower bound on $\Delta_M$ requires slightly more work, and can be derived as follows:
\begin{align}
&n (\Delta_M+\epsilon) \geq I(Y^n;F_x,F_y) \notag\\
&=I(Y^n;X^n,F_y)+I(X^n;F_x,F_y)-I(X^n;F_x,Y^n) \notag \\
&\geq I(Y^n;X^n,F_y)+n\Delta_A-I(X^n;F_x,Y^n)-n\epsilon \label{eqn:convAMAmp}\\
&\geq \sum_{i=1}^n I(Y_i;X^n,F_y|Y^{i-1})+n\Delta_A-H(X^n)-n\epsilon\notag \\
&\geq \sum_{i=1}^n I(Y_i;X_i,U_i)+\Delta_A-H(X_i)-\epsilon,\notag
\end{align}
where  \eqref{eqn:convAMAmp} follows since amplification $\Delta_A$ is achievable.

Observing that the Markov condition $U_i \leftrightarrow Y_i \leftrightarrow X_i$ is satisfied for each $i$, a standard timesharing argument proves the existence of a random variable $U$ such that $U \leftrightarrow Y \leftrightarrow X$ forms a Markov chain and \eqref{eqn:thm1Ineqs} is satisfied.

\emph{Direct Part:}
Fix $p(u|y)$ and suppose $(R_x,R_y,\Delta_A,\Delta_M)$ satisfy \eqref{eqn:thm1Ineqs} with strict inequality.
Next, fix $\epsilon>0$ sufficiently small so that it is less than the minimum slack in said inequalities, and set $\tilde{R}=I(Y;U)+\epsilon$.  Our achievability scheme uses a standard random coding argument which we sketch below.

\textbf{Codebook generation.} Randomly and independently, bin the typical $x^n$'s uniformly into $2^{n(\Delta_A-I(X;U)+\epsilon)}$ bins.  Let $b(x^n)$ be the index of the bin which contains $x^n$.  For $l\in\{1,\dots,2^{n\tilde{R}}\}$, randomly and independently generate $u^n(l)$, each according to $\prod_{i=1}^n p_U(u_i)$.  

\textbf{Encoding.} Encoder 1, upon observing the sequence $X^n$, sends the corresponding bin index $b(X^n)$ to the decoder.  If $X^n$ is not typical, an error is declared.  Encoder 2, upon observing the sequence $Y^n$, finds an $L\in\{1,\dots,2^{n\tilde{R}}\}$ such that $(Y^n,U^n(L))$ are jointly typical, and sends the unique index $L$ to the decoder.  If more than one such $L$ exists, ties are broken arbitrarily. If no such $L$ exists, then an error is declared.

This coding scheme clearly satisfies the given rates.  Further, each encoder errs with arbitrarily small probability as $n\rightarrow \infty$.  Hence, we only need to check that the amplification and masking constraints are satisfied.  To this end, let $\mathcal{C}$ be the random codebook.  We first check that the amplification and masking constraints are separately satisfied when averaged over random codebooks $\mathcal{C}$.

To see that the (averaged) amplification constraint is satisfied, consider the following:
\begin{align}
&I(X^n; F_x,F_y|\mathcal{C})
=H(X^n|\mathcal{C})-H(X^n|b(X^n),L,\mathcal{C})\notag\\
&\geq nH(X)-n(H(X)-\Delta_A+\delta(\epsilon)) \label{eqn:numXtyp}\\
&=n (\Delta_A-\delta(\epsilon)),\notag
\end{align}
where \eqref{eqn:numXtyp} follows since $X^n$ is independent of $\mathcal{C}$ and, averaged over codebooks, there are at most $2^{n(H(X)-\Delta_A+\delta(\epsilon))}$ sequences $x^n$ in bin $b(X^n)$ which are typical with $U^n(L)$, where $L\in\{1,\dots,2^{n\tilde{R}}\}$.  The details are given in the Appendix.

We now turn our attention to the masking criterion.  First note the following inequality:
\begin{align} 
&I(Y^n;F_x,F_y|\mathcal{C}) 
=I(Y^n;L|\mathcal{C})+I(Y^n;b(X^n)|L,\mathcal{C}) \notag\\
&\leq I(Y^n;L|\mathcal{C})+H(b(X^n)|\mathcal{C}) -H(b(X^n)|Y^n,\mathcal{C})\notag\\
&=I(Y^n;L|\mathcal{C})+I(X^n;Y^n)-H(X^n)+H(b(X^n)|\mathcal{C}) \notag \\
&\quad -H(b(X^n)|Y^n,\mathcal{C})+H(X^n|Y^n) \notag\\
&\leq I(Y^n;L|\mathcal{C})+I(X^n;Y^n)-H(X^n)+H(b(X^n)|\mathcal{C}) \notag \\
&\quad -I(b(X^n);X^n|Y^n,\mathcal{C})+H(X^n|Y^n) \notag\\
&=I(Y^n;L|\mathcal{C})+I(X^n;Y^n)-H(X^n)+H(b(X^n)|\mathcal{C}) \notag \\
&\quad+H(X^n|Y^n,b(X^n),\mathcal{C})\label{eqn:continue}
\end{align}
Two of the terms in \eqref{eqn:continue} can be bounded as follows:  First, since $L \in\{1,\dots,2^{n\tilde{R}}\}$, we have
\begin{align*}
I(Y^n;L|\mathcal{C})\leq  n\tilde{R} =  n(I(Y;U)+\epsilon).
\end{align*}
Second,  there are $2^{n(\Delta_A-I(X;U)+\epsilon)}$ bins at Encoder $1$ by construction, and hence $H(b(X^n)|\mathcal{C})\leq n(\Delta_A -I(X;U)+\epsilon)$.
Therefore, substituting into \eqref{eqn:continue} and simplifying, we have:
\begin{align} 
I(Y^n;F_x,F_y|\mathcal{C}) 
&\leq n(I(Y;U,X)+\Delta_A-H(X)) \notag \\
&\quad+H(X^n|Y^n,b(X^n),\mathcal{C}) +n2\epsilon. \label{eqn:finalStep}
\end{align}
We now consider three separate cases.  First, assume $\Delta_A\leq I(U;X)$.
Then, 
\begin{align*}
I(Y;X,U)+\Delta_A-H(X) &\leq I(Y;X,U)-H(X|U) \\
&= I(Y;U)-H(X|Y),
\end{align*}
and \eqref{eqn:finalStep} becomes 
\begin{align*}
I(Y^n;F_x,F_y|\mathcal{C}) 
&\leq nI(Y;U) -I(X^n;b(X^n)|Y^n,\mathcal{C}) +n2\epsilon\\
&\leq nI(Y;U) +n 2 \epsilon.
\end{align*}
Next, suppose that $\Delta_A \geq I(X;U) + H(X|Y)$.
In this case, there are greater than $2^{n(H(X|Y)+\epsilon)}$ bins in which the $X^n$ sequences are distributed.  Hence, knowing $Y^n$ and $b(X^n)$ is sufficient to determine $X^n$ with high probability (i.e., we have a Slepian-Wolf binning at Encoder 1).  Therefore, $H(X^n|Y^n,b(X^n),\mathcal{C})\leq n\epsilon$, and \eqref{eqn:finalStep} becomes 
\begin{align*}
I(Y^n;F_x,F_y|\mathcal{C}) 
\leq n(I(Y;X,U)+\Delta_A-H(X))  +n3\epsilon.
\end{align*}

Finally, suppose $\Delta_A = I(X;U) + \theta H(X|Y)$ for some $\theta\in [0,1]$.  In this case, we can timeshare between a code $\mathcal{C}_1$ designed for amplification $\Delta_A'=I(X;U)$ with probability $\theta$, and a code $\mathcal{C}_2$ designed for amplification $\Delta_A''=I(X;U)+H(X|Y)$ with probability $1-\theta$ to obtain a code $\mathcal{C}$ with the same average rates and averaged amplification
\begin{align*}
&I(X^n;F_x,F_y|\mathcal{C}) \\
&= \theta I(X^n;F_x,F_y|\mathcal{C}_1) + (1-\theta)I(X^n;F_x,F_y|\mathcal{C}_2)\\
&\geq n(I(X;U) + \theta H(X|Y)-\delta(\epsilon)) = n(\Delta_A-\delta(\epsilon)).
\end{align*}  Then, applying the inequalities obtained in the previous two cases, we obtain:
\begin{align*}
&I(Y^n;F_x,F_y|\mathcal{C})\\
&=\theta I(Y^n;F_x,F_y|\mathcal{C}_1)+(1-\theta) I(Y^n;F_x,F_y|\mathcal{C}_2) \\
&\leq \theta nI(Y;U) +(1-\theta)n(I(Y;X,U)+\Delta_A''-H(X)) + 3n \epsilon\\
&=  nI(Y;U) + 3n \epsilon.
\end{align*}
Combining these three cases proves that 
\begin{align*}
&\frac{1}{n}I(Y^n;F_x,F_y|\mathcal{C})\\
&\quad \leq \max\{I(Y;U,X)+\Delta_A-H(X),I(Y;U)\}+3\epsilon \\
&\quad \leq \Delta_M+3\epsilon.
\end{align*}
To show that there exists a code which satisfies the amplification and masking constraints simultaneously, we construct a super-code $\bar{\mathcal{C}}$ of blocklength $Nn$ by concatenating $N$ randomly, independently chosen codes of length $n$ (each constructed as described above).  By the weak law of large numbers and independence of the concatenated coded blocks,
\begin{align*}
\Pr\left(\left\{ \bar{c} : \frac{1}{Nn}I(X^{Nn};\bar{F}_x,\bar{F}_y|\bar{\mathcal{C}}=\bar{c})>\Delta_A- \delta(\epsilon) \right\}\right) & \geq 3/4\\
\Pr\left(\left\{ \bar{c} : \frac{1}{Nn}I(Y^{Nn};\bar{F}_x,\bar{F}_y|\bar{\mathcal{C}}=\bar{c})<\Delta_M+ \delta(\epsilon)\right\} \right) & \geq 3/4
\end{align*}
for $N$ and $n$ sufficiently large.  Thus, there must exist one super-code which simultaneously satisfies both desired constraints.
This completes the proof that $(R_x,R_y,\Delta_A,\Delta_M)$ is achievable.  Finally,  we invoke the Support Lemma \cite{bib:CsiszarKorner1981} to see that $|\mathcal{Y}|-1$ letters are sufficient to preserve $p(y)$.  Plus, we require two more letters to preserve the values of $H(X|U)$ and $I(Y;U|X)$.
\end{proof}

\begin{proof}[Proof of Corollary \ref{cor:masking}]
By setting $\Delta_A=H(X)$, \cite[Theorem 2]{bib:AhlswedeKorner1975} implies that $X^n$ can be reproduced near losslessly.  A simplified version of the argument in the direct part of the proof of Theorem \ref{thm:Ampmasking} shows that the masking criterion will be satisfied for the standard coding scheme.  The converse of Theorem \ref{thm:Ampmasking} continues to apply
\end{proof}

\begin{proof}[Proof of Theorem \ref{thm:entropy}]
First, we remark that the strengthened version of \cite[Theorem 6]{bib:CourtadeWeissman2011} states that $\mathcal{R}_{AA}(R_x,R_y)$ is the closure of pairs $(\Delta_X,\Delta_Y)$ such that there exists a $(2^{nR_x}, 2^{nR_y}, n)$ code satisfying 
\begin{align*}
\Delta_X &\leq \frac{1}{n}I(X^n;f_x(X^n),f_y(Y^n)) + \epsilon,\\
\Delta_Y &\leq \frac{1}{n}I(Y^n;f_x(X^n),f_y(Y^n)) + \epsilon
\end{align*}
for any $\epsilon>0$.

Suppose $(\Delta_X,\Delta_Y)\in \mathcal{R}^{\star}(R_x,R_y)$. By definition of $ \mathcal{R}^{\star}(R_x,R_y)$, Theorem \ref{thm:Ampmasking}, and the above statement, $(\Delta_X,\Delta_Y)$ also lies in each of the sets $\mathcal{R}_{AM}(R_x,R_y)$,  $\mathcal{R}_{MA}(R_x,R_y)$, and $\mathcal{R}_{AA}(R_x,R_y)$.  Since each of these sets are closed by definition, we must have 
\begin{align*}
\overline{\mathcal{R}^{\star}}& (R_x,R_y)\subseteq \\
&\mathcal{R}_{AM}(R_x,R_y)  \cap \mathcal{R}_{MA}(R_x,R_y)  \cap \mathcal{R}_{AA}(R_x,R_y) .
\end{align*}

Since each point in the sets $\mathcal{R}_{AM}(R_x,R_y)$,  $\mathcal{R}_{MA}(R_x,R_y)$, and $\mathcal{R}_{AA}(R_x,R_y)$ is achievable by vector quantization and/or random binning, the second statement of the Theorem is proved.

To show the reverse inclusion, fix $\epsilon>0$ and suppose $(\Delta_X,\Delta_Y)\in \mathcal{R}_{AM}(R_x,R_y)  \cap \mathcal{R}_{MA}(R_x,R_y)  \cap \mathcal{R}_{AA}(R_x,R_y)$.  This implies the existence of $(2^{n_{AM}R_x}, 2^{n_{AM}R_y}, n_{AM})$, $(2^{n_{MA}R_x}, 2^{n_{MA}R_y}, n_{MA})$, and $(2^{n_{AA}R_x}, 2^{n_{AA}R_y}, n_{AA})$ codes satisfying:
\begin{align*}
\Delta_X &\leq \frac{1}{n_{AM}}I(X^{n_{AM}};f_x^{AM}(X^{n_{AM}}),f_y^{AM}(Y^{n_{AM}})) + \epsilon,\\
\Delta_Y &\geq \frac{1}{n_{AM}}I(Y^{n_{AM}};f_x^{AM}(X^{n_{AM}}),f_y^{AM}(Y^{n_{AM}})) - \epsilon.\\
\Delta_X &\geq \frac{1}{n_{MA}}I(X^{n_{MA}};f_x^{MA}(X^{n_{MA}}),f_y^{MA}(Y^{n_{MA}})) - \epsilon,\\
\Delta_Y &\leq \frac{1}{n_{MA}}I(Y^{n_{MA}};f_x^{MA}(X^{n_{MA}}),f_y^{MA}(Y^{n_{MA}})) + \epsilon,\\
\Delta_X &\leq \frac{1}{n_{AA}}I(X^{n_{AA}};f_x^{AA}(X^{n_{AA}}),f_y^{AA}(Y^{n_{AA}})) + \epsilon,\\
\Delta_Y &\leq \frac{1}{n_{AA}}I(Y^{n_{AA}};f_x^{AA}(X^{n_{AA}}),f_y^{AA}(Y^{n_{AA}})) + \epsilon.
\end{align*}
Also, by taking $f_x^{MM}, f_y^{MM}$ to be constants, we trivially have a $(2^{n_{MM}R_x}, 2^{n_{MM}R_y}, n_{MM})$ code satisfying
\begin{align*}
\Delta_X &\geq \frac{1}{n_{MM}}I(X^{n_{MM}};f_x^{MM}(X^{n_{MM}}),f_y^{MM}(Y^{n_{MM}})) ,\\
\Delta_Y &\geq \frac{1}{n_{MM}}I(Y^{n_{MM}};f_x^{MM}(X^{n_{MM}}),f_y^{MM}(Y^{n_{MM}})).
\end{align*}
It is readily verified that, by an appropriate timesharing between these four codes, there exists a $(2^{nR_x}, 2^{nR_y}, n)$ code satisfying \begin{align*}
\left| \Delta_X -\frac{1}{n}I(X^n;f_x(X^n),f_y(Y^n))\right| &\leq \delta(\epsilon), \mbox{~and}\\
\left| \Delta_Y -\frac{1}{n}I(Y^n;f_x(X^n),f_y(Y^n))\right| &\leq \delta(\epsilon).
\end{align*}
This completes the proof of the theorem.
\end{proof}

\section{Concluding Remarks}\label{sec:conc}
In this paper, we considered a setting where two separate encoders have access to correlated sources.  We gave a complete characterization of the tradeoff between amplifying information about one source while simultaneously masking another.  By combining this result with recent results by Courtade and Weissman \cite{bib:CourtadeWeissman2011}, we precisely characterized the amount of information that can be revealed about $X^n$ and $Y^n$ by any encoding functions satisfying given rates.  There are three notable points here: (i) this multi-letter entropy characterization problem admits a single-letter solution,  (ii) restriction of encoding functions to vector quantization and/or random binning is sufficient to achieve any point the region, and (iii) this simple characterization does not extend to three or more sources/encoders.

Finally, we remark that in the state amplification and masking problems considered in \cite{bib:YHKimSutivongCover2008} and \cite{bib:MerhavShamai2007}, the authors obtain explicit characterizations of the achievable regions when the channel state and noise are independent Gaussian random variables.  Presumably, this could also be accomplished in our setting using known results on Gaussian multiterminal source coding, however, a compete investigation into this matter is beyond the scope of this paper

\section*{Acknowledgment}
The author gratefully acknowledges the conversations with Tsachy Weissman and comments by an anonymous reviewer which contributed to this paper. 

\appendix
\begin{lemma}
With all quantities defined as in the proof of Theorem \ref{thm:Ampmasking},
\begin{align*}
\limsup_{n\rightarrow\infty} \frac{1}{n} H(X^n|L,b(X^n),\mathcal{C}) \leq H(X)-\Delta_A+\delta(\epsilon).
\end{align*}
\end{lemma}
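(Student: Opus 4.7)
The plan is to upper bound $H(X^n|L,b(X^n),\mathcal{C})$ by the expected log-cardinality of the set of source sequences consistent with the observed triple $(L,b(X^n),\mathcal{C})$, then pull the logarithm outside the expectation via Jensen's inequality.

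First, I would define the ``good'' event $\mathcal{E}$ on which $(X^n,Y^n,U^n(L))\in\mathcal{T}_\epsilon^n$. Since $\tilde{R}=I(Y;U)+\epsilon$, the covering lemma produces a $U^n(L)$ jointly typical with $Y^n$ with probability tending to one; the Markov chain $U\leftrightarrow Y\leftrightarrow X$ then forces $(X^n,U^n(L))$ to be jointly typical as well, so $\Pr(\mathcal{E}^c)\to 0$. A standard splitting argument gives
\begin{align*}
H(X^n|L,b(X^n),\mathcal{C}) &\leq 1 + H(X^n|L,b(X^n),\mathcal{C},\mathcal{E})\\
&\quad + \Pr(\mathcal{E}^c)\,n\log|\mathcal{X}|,
\end{align*}
so the atypical contribution is $o(n)$ and I need only handle the typical term.

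On $\mathcal{E}$, the sequence $X^n$ must lie in the random set
\begin{align*}
\mathcal{S} = \{x^n\in\mathcal{T}_\epsilon^n : b(x^n)=b(X^n),\,(x^n,U^n(L))\in\mathcal{T}_\epsilon^n\},
\end{align*}
so $H(X^n|L,b(X^n),\mathcal{C},\mathcal{E})\leq\mathbb{E}[\log|\mathcal{S}|]\leq\log\mathbb{E}[|\mathcal{S}|]$ by Jensen's inequality. To evaluate $\mathbb{E}[|\mathcal{S}|]$, I would condition on $X^n=x_0^n$: the random binning is drawn independently of both the codewords $(U^n(l))_l$ and $Y^n$, so for each $x^n\neq x_0^n$ in $\mathcal{T}_\epsilon^n$ the event $\{b(x^n)=b(x_0^n)\}$ has probability $2^{-n(\Delta_A-I(X;U)+\epsilon)}$ and is independent of $\{(x^n,U^n(L))\in\mathcal{T}_\epsilon^n\}$. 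Combined with the standard typicality bound $|\{x^n\in\mathcal{T}_\epsilon^n : (x^n,u^n)\in\mathcal{T}_\epsilon^n\}|\leq 2^{n(H(X|U)+\delta(\epsilon))}$ and the identity $H(X|U)+I(X;U)=H(X)$, this yields
\begin{align*}
\mathbb{E}[|\mathcal{S}|] \leq 1 + 2^{n(H(X)-\Delta_A+\delta(\epsilon))},
\end{align*}
and the lemma follows after taking logarithms and dividing by $n$.

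The main obstacle is the independence argument that decouples $\{b(x^n)=b(X^n)\}$ from $\{(x^n,U^n(L))\in\mathcal{T}_\epsilon^n\}$ for $x^n\neq X^n$. Although $b(X^n)$, $L$, and $U^n(L)$ are correlated random objects, conditioning on $X^n$ renders the binning---a function of the bin-assignment component of $\mathcal{C}$ alone---independent of the covering-based selection of $L$ and $U^n(L)$, which depend only on $Y^n$ and the codeword component of $\mathcal{C}$. The remaining steps (joint typicality of $(X^n,U^n(L))$, the Jensen bound, and the atypical-event estimate) are routine and each contributes at most a $\delta(\epsilon)$ or $o(1)$ correction.
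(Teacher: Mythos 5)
Your proof is correct, and it reaches the bound by a genuinely different route at the key step. The paper (following Lemma 22.3 of El Gamal--Kim) introduces \emph{two} error events: $E_1$ for atypicality of $(X^n,U^n(L))$, and a second event $E_2$ declaring that the bin-occupancy count $N(l,b,c_{VQ},\mathcal{C}_B)$ --- the number of typical $x^n$ in the observed bin that are also jointly typical with $u^n(l)$ --- exceeds twice its mean; it then controls $\Pr(E_2=1)$ via a Chebyshev bound on the binomial variable $N$, and on $\{E_2=0\}$ bounds the conditional entropy by $\log(2\mathbb{E}[N])\leq n(H(X|U)-\tilde{R}_x+\delta(\epsilon))$. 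You instead use a single good event and pass directly from $H(X^n|\cdot)\leq\mathbb{E}[\log|\mathcal{S}|]$ to $\log\mathbb{E}[|\mathcal{S}|]$ by Jensen, which eliminates the concentration step and the variance computation entirely; the price is that you must justify the independence of the random binning from the covering-based selection of $U^n(L)$ in order to factor $\mathbb{E}[|\mathcal{S}|]$, but the paper needs essentially the same independence to compute $\mathbb{E}[N]$, so nothing extra is really being assumed. Your identification of this independence as ``the main obstacle'' is exactly right, and your resolution (the bin assignments are a function of $\mathcal{C}_B$ alone, while $L$ and $U^n(L)$ depend only on $Y^n$ and $\mathcal{C}_{VQ}$) is sound. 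Two minor points you gloss over but which are routine: the expectation in the Jensen step is really $\mathbb{E}[|\mathcal{S}|\mid\mathcal{E}]\leq\mathbb{E}[|\mathcal{S}|]/\Pr(\mathcal{E})$, costing only $o(1)$ bits since $\Pr(\mathcal{E})\rightarrow 1$; and the claim that $(X^n,U^n(L))$ is jointly typical with high probability requires the Markov lemma (not just the Markov chain $U\leftrightarrow Y\leftrightarrow X$ plus the covering lemma), since $U^n(L)$ is selected from a codebook rather than drawn i.i.d.\ conditionally on $Y^n$ --- but the paper's own assertion that $\Pr(E_1=1)\rightarrow 0$ rests on the same fact.
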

\begin{proof}
We follow the proof strategy of \cite[Lemma 22.3]{bib:ElGamalYHKim2012} and make adjustments where necessary.  
For convenience, define $\tilde{R}_x=\Delta_A - I(X;U)+\epsilon$ and recall that $\epsilon$ was chosen sufficiently small so that $\tilde{R}_x<H(X|U)$.  Note that we can express the random codebook $\mathcal{C}$ as a pair of random codebooks $\mathcal{C}=(\mathcal{C}_B,\mathcal{C}_{VQ})$, where $\mathcal{C}_B$ is the ``binning codebook" at Encoder 1, and $\mathcal{C}_{VQ}$ is the ``vector-quantization codebook" at Encoder 2.

Let $E_1=1$ if $(X^n,U^n(L))\notin \mathcal{T}_{\epsilon}^{(n)}$ and $E_1=0$ otherwise.  Note that $\Pr(\{E_1=1\})$ tends to $0$ as $n\rightarrow\infty$.
Consider
\begin{align*}
&H(X^n|L,b(X^n),\mathcal{C})\\
&\leq H(X^n,E_1|L,b(X^n),\mathcal{C})\\
&\leq 1+n\Pr(\{E_1=1\})H(X)\\
&+\sum_{(l,b,c_{VQ})}p(l,b,c_{VQ}|E_1=0)\\
&\quad \times H(X^n|L=l,b(X^n)=b,E_1=0,\mathcal{C}_{VQ}=c_{VQ},\mathcal{C}_{B}).
\end{align*}
Now, let $N(l,b,c_{VQ},\mathcal{C}_B)$ be the number of sequences $x^n\in \mathcal{B}(b)\cap \mathcal{T}_{\epsilon}^{(n)}(X|u^n(l))$, where $\mathcal{B}(b)$ denotes the bin of $x$-sequences which is labeled by index $b$ and $u^n(l)$ is the codeword in the (fixed) codebook $c_{VQ}$ with index $l$.  Note that $N(l,b,c_{VQ},\mathcal{C}_B)$ is a binomial random variable, where the source of randomness comes from the random codebook $\mathcal{C}_B$. Define
\begin{align*}
&E_2(l,b,c_{VQ},\mathcal{C}_B) \\
&\quad = \left\{ \begin{array}{ll} 1 & \mbox{if $N(l,b,c_{VQ},\mathcal{C}_B)\geq 2\mathbb{E}\left[N(l,b,c_{VQ},\mathcal{C}_B)\right]$,}\\
0 & \mbox{otherwise.}
\end{array}\right.
\end{align*}
Due to the binomial distribution of $N(l,b,c_{VQ},\mathcal{C}_B)$, it is readily verified that
\begin{align*}
\mathbb{E} \left[N(l,b,c_{VQ},\mathcal{C}_B)\right] &= 2^{-n\tilde{R}_x}\left|\mathcal{T}_{\epsilon}^{(n)}(X|u^n(l))\right|,\\
\mbox{Var}(N(l,b,c_{VQ},\mathcal{C}_B)) &\leq 2^{-n\tilde{R}_x}\left|\mathcal{T}_{\epsilon}^{(n)}(X|u^n(l))\right| .
\end{align*}
Then, by the Chebyshev lemma \cite[Appendix B]{bib:ElGamalYHKim2012},
\begin{align*}
\Pr(\{E_2(l,b,c_{VQ},\mathcal{C}_B) =1\})
&\leq \frac{\mbox{Var}(N(l,b,c_{VQ},\mathcal{C}_B))}{\left( \mathbb{E} \left[ N(l,b,c_{VQ},\mathcal{C}_B) \right] \right)^2}\\
&\leq 2^{-n(H(X|U)-\tilde{R}_x-\delta(\epsilon))},
\end{align*}
which tends to zero as $n\rightarrow\infty$ if $\tilde{R}_x<H(X|U)-\delta(\epsilon)$, which is satisfied for $\epsilon$ sufficiently small.  Now consider
\begin{align*}
&H(X^n|L=l,b(X^n)=b,E_1=0,\mathcal{C}_{VQ}=c_{VQ},\mathcal{C}_{B}) \\
&\leq H(X^n,E_2|L=l,b(X^n)=b,E_1=0,\mathcal{C}_{VQ}=c_{VQ},\mathcal{C}_{B}) \\
&\leq 1 + n\Pr(\{E_2=1\})H(X) \\
&+ H(X^n|L=l,b(X^n)=b,E_1=0,E_2=0,\mathcal{C}_{VQ}=c_{VQ},\mathcal{C}_{B}) \\
&\leq 1 + n\Pr(\{E_2=1\})H(X) \\
&+ n(H(X|U)-\tilde{R}_x+\delta(\epsilon)), 
\end{align*}
which implies that
\begin{align*}
&H(X^n|L,b(X^n),\mathcal{C})\\
&\leq 2+n(\Pr(\{E_1=1\})+\Pr(\{E_2=1\}))H(X) \\
&\quad+ n(H(X|U)-\tilde{R}_x+\delta(\epsilon))\\
&\leq 2+n(\Pr(\{E_1=1\})+\Pr(\{E_2=1\}))H(X) \\
&\quad+ n(H(X)-\Delta_A+\delta(\epsilon)).
\end{align*}
Taking $n\rightarrow\infty$ completes the proof.
\end{proof}
\newpage

\bibliographystyle{IEEEtran.bst}
\bibliography{myBib}

\end{document}